\newtheorem{thm}{Theorem}[section]
\newcommand{\R}{{\rm I}\kern-0.18em{\rm R}}
\newcommand{\1}{{\rm 1}\kern-0.25em{\rm I}}
\newcommand{\E}{{\rm I}\kern-0.18em{\rm E}}
\newcommand{\p}{{\rm I}\kern-0.18em{\rm P}}
\author{Lev B Klebanov\footnote{Department of Probability and Statistics, MFF, Charles University, Prague-8, 18675, Czech Republic, e--mail: levbkl@gmail.com}, Gregory Temnov\footnote{Department of Probability and Statistics, MFF, Charles University, Prague-8, 18675, Czech Republic, e--mail: levbkl@gmail.com}, \\Ashot V. Kakosyan\footnote{Yerevan State University}}
\title{Some Contra-Arguments for the Use of Stable Distributions in Financial Modeling}
\date{}
\begin{document}
\maketitle

\begin{abstract}

In the present paper, we discuss contra-arguments concerning the use of Pareto-Lev\'{y} distributions for modeling in Finance. It appears that such probability laws do not provide sufficient number of outliers observed in real data. Connection with the classical limit theorem for heavy-tailed distributions with such type of models is also questionable. The idea of alternative modeling is given. 

\noindent
{\bf keywords}:outliers; financial indexes; heavy tails; stable distributions

\end{abstract}

\section{Introduction}\label{s1}
\setcounter{equation}{0}

During latest decades, we have seen large numbers of publications on the use of Pareto-Lev\'{y} and other heavy-tailed distributions in Finance (see, for example, \cite{RM}, \cite{HB} and references there). One of the key authors that initiated  this, was Mandelbrot. He mentioned that Gaussian distribution cannot provide a solid explanation for observed large amount of `outliers' - in other words, the number of observations for which absolute value of deviation from empirical mean is bigger than $k s$, where $s^2$ is empirical variance. Aparently, Mandelbrot considered stable distributions as a unique alternative to Gaussian family basing on the fact of classical limit theorem for independent identically distributed (i.i.d.) random variables, or, equivalently, on stability property. In the remainder of this paper, we show that stable distributions cannot explain such a large number of outliers either. The connection to classical limit theorem for a non-random number of random summands seems to be questionable. A more natural perspective comes from the use of limit theorem for a random number of random variables.

\section{Probability of outliers for large samples}\label{s2} 
\setcounter{equation}{0}

In this Section we look at the probability of observing outliers in the case of distributions belonging to a domain of attraction of strictly stable distribution \footnote{For the definition and description of the domain of attraction of stable distributions see \cite{IL}}.

Specifically, suppose that $X_1,X_2, \ldots ,X_n$ is a sequence of i.i.d. random variables. Denote by
\[ \bar{x}_n=\frac{1}{n}\sum_{j=1}^{n}X_j, \;\; s^2_n=\frac{1}{n}\sum_{j=1}^{n}(X_j-\bar{x})^2\]
their empirical mean and empirical variance correspondingly. Let $k>0$ be a fixed number. We are interested in the following probability
\begin{equation}
\label{eq1}
p_n = \p\{|X_1-\bar{x}_n|>k s_n \}.
\end{equation}
Our aim here is to prove the following Theorem.
\begin{thm}
\label{th1} 
Suppose that $X_1,X_2, \ldots ,X_n$ is a sequence of i.i.d. random variables belonging to a domain of attraction of strictly stable random variable with index of stability $\alpha \in (0,2)$. Then
\begin{equation}
\label{eq2}
\lim_{n \to \infty}p_n =0.
\end{equation}
\end{thm}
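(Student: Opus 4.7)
The plan is to exploit the fact that for $\alpha\in(0,2)$ the empirical variance $s_n^2$ blows up at a definite power rate that outpaces any single observation. First I would invoke the standard description of the domain of attraction: $X_1$ lies in the domain of attraction of a strictly $\alpha$-stable law with $\alpha\in(0,2)$ iff $\mathbb{P}(|X_1|>x)=x^{-\alpha}L(x)$ for some slowly varying $L$. Consequently $X_1^2$ has tail $\mathbb{P}(X_1^2>y)\sim y^{-\alpha/2}L(\sqrt y)$, so $X_1^2$ lies in the domain of attraction of a (one-sided) stable law of index $\alpha/2<1$. Hence there are constants $B_n\sim n^{1/\alpha}L_1(n)$ with $B_n^{-2}\sum_{j=1}^n X_j^2\Rightarrow W$ in distribution, where $W>0$ almost surely. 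Writing $c_n=B_n/\sqrt{n}\sim n^{1/\alpha-1/2}L_1(n)$, this yields $s_n^2/c_n^2\Rightarrow W$ (modulo the asymptotically negligible correction $\bar{x}_n^2/c_n^2$), so $s_n$ is of the precise order $c_n$ and is bounded away from $0$ on that scale with high probability.

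Next I would show that $\bar{x}_n$ and, with high probability, $X_1$ both live on scales much smaller than $c_n$. Strict stability gives $B_n^{-1}(S_n-\gamma_n)\Rightarrow Y_\alpha$ for a suitable centering, hence $\bar{x}_n-\gamma_n/n=O_p(B_n/n)=O_p(c_n/\sqrt n)=o_p(c_n)$ in all three regimes $\alpha>1$, $\alpha=1$, $\alpha<1$. Fix $\varepsilon>0$ and choose $\delta>0$ so that with probability $\geq 1-\varepsilon$ one has both $s_n\geq \delta c_n$ and $|\bar{x}_n|\leq \delta c_n/2$; on this event the outlier condition $|X_1-\bar{x}_n|>k s_n$ forces $|X_1|>\tfrac{k\delta}{2}c_n$. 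Therefore
\[
p_n\;\leq\;\varepsilon+\mathbb{P}\!\left(|X_1|>\tfrac{k\delta}{2}c_n\right)\;=\;\varepsilon+O\!\left(c_n^{-\alpha}L(c_n)\right)\;=\;\varepsilon+O\!\left(n^{-(1-\alpha/2)}\widetilde L(n)\right)\;\longrightarrow\;\varepsilon,
\]
since $1-\alpha/2>0$ dominates any slowly varying factor, and $\varepsilon$ was arbitrary.

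The main technical obstacle I expect is the first step: showing rigorously that $s_n/c_n$ is bounded away from $0$ in probability. This relies on the fact that the stable limit $W$ for $\sum X_j^2$ is one-sided and strictly positive a.s., and on a careful check that $\bar{x}_n^2=o_p(c_n^2)$, which for $\alpha\leq 1$ requires the centering provided by strict stability. A subsidiary subtlety is that on the event $\{|X_1|>(k\delta/2)c_n\}$ the single term $X_1^2$ may itself dominate $\sum X_j^2$, so $s_n$ and $X_1$ are not independent; the clean fix is to replace $s_n$ by the leave-one-out version $\tilde s_n^2=\frac{1}{n}\sum_{j\geq 2}(X_j-\bar{x}_n^{(1)})^2$, which is independent of $X_1$ and, by the same stable limit applied to $n-1$ terms, is comparable to $s_n$ with probability tending to $1$.
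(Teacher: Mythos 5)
Your proposal is correct, and it rests on exactly the same key fact as the paper's proof: since $X_1^2$ lies in the domain of attraction of a one-sided $\alpha/2$-stable law, $\frac1n\sum_{j}X_j^2$ (hence $s_n^2$) diverges at rate $n^{2/\alpha-1}$ up to slowly varying factors, while $X_1$ and $\bar x_n$ live on strictly smaller scales. The difference is in organization and rigor. The paper splits into the cases $1<\alpha<2$, $0<\alpha<1$, $\alpha=1$ and argues each through informal asymptotic equivalences ($\bar x_n\sim n^{1/\alpha-1}Y$, $s_n^2\sim n^{2/\alpha-1}Z$), in effect treating the random quantities $s_n$ and $\bar x_n$ as deterministic when passing to the limit in $\p\{X_1>ks_n+\bar x_n\}$, and it suppresses the slowly varying corrections that a general member of the domain of attraction requires. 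Your unified argument -- regular variation of the tail, the event $\{s_n\ge\delta c_n,\ |\bar x_n|\le\delta c_n/2\}$ of probability at least $1-\varepsilon$ (using that the one-sided limit $W$ is strictly positive a.s.), and the bound $p_n\le\varepsilon+\p\{|X_1|>(k\delta/2)c_n\}$ -- makes these steps honest in all three regimes at once, and in particular handles the dependence between $X_1$ and $s_n$ correctly: the union-bound structure already gives $\p(A\cap B)\le\p(B)$, so the leave-one-out modification you flag as a possible fix is not actually needed. (The paper's case $\alpha>1$ can alternatively be repaired by bounded convergence, since $s_n\to\infty$ a.s. by the strong law for positive variables with infinite mean.) Your quantitative route additionally yields the rate $p_n=O\bigl(n^{-(1-\alpha/2)}\widetilde L(n)\bigr)$, which the paper does not state; the only cosmetic caveat is that your tail condition $\p(|X_1|>x)=x^{-\alpha}L(x)$ should be supplemented by the usual tail-balance condition to be an exact characterization of the domain of attraction, though only the necessary direction is used in the proof.
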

\begin{proof}
Since $X_j,\; j=1, \ldots ,n$ belong to the domain of attraction of strictly stable random variable with index $\alpha <2$, it is also true that $X_1^2, \ldots , X_n^2$ belong to the domain of attraction of one-sided stable distribution with index $\alpha /2$. 

1) Consider at first the case $1<\alpha <2$. In this case, $\bar{x}_n \stackrel[n \to \infty]{}{\longrightarrow }a=\E X_1$ and
$s_n \stackrel[n \to \infty]{}{\longrightarrow }\infty$. We have
\[ \p\{|X_1 - \bar{x}_n| >k s_n\} = \p\{X_1> ks_n+\bar{x}_n\}+\p\{X_1< -ks_n+\bar{x}_n\} = \]
\[=\p\{X_1> ks_n+a+o(1)\}+\p\{X_1< -ks_n+a +o(1)\}  \stackrel[n \to \infty]{}{\longrightarrow } 0. \]

2) Suppose now that $0<\alpha < 1$. In this case, we have $\bar{x}_n \sim n^{1/\alpha -1}Y$ as $n \to \infty$. Here $Y$ is $\alpha$-stable random variable, and the sign $\sim$ is used for asymptotic equivalence. Similarly, 
\[ s^2_n =\frac{1}{n}\sum_{j=1}^{n}X_j^2 - \bar{x}^2_n \sim n^{2/\alpha -1}Z (1+o(1)), \]
where $Z$ has one-sided positive stable distribution with index $\alpha /2$. We have
\[ \p\{|X_1-\bar{x}_n|>k s_n\}=\p\{ (X_1-\bar{x}_n)^2>k s^2_n\}=\]
\[=\p\{X_1^2 > n^{2/\alpha -1}Z (1+o(1))\}  \stackrel[n \to \infty]{}{\longrightarrow } 0. \] 

3) In the case $\alpha =1$ we deal with Cauchy distribution. The proof for this case is very similar to that in the case 2). We omit the details.
\end{proof}

Let us note that for the case of distributions having finite second moment and non-compact support, the probability (\ref{eq1}) has a positive limit as $n \to \infty$. 
Indeed, if the second moment is finite then both $\bar{x}_n$ and $s_n$ have finite limits $a=\E X$ and $\sigma^2 = \E X^2 - (\E X)^2$. The probability (\ref{eq1}) converges to
$\p\{|X_1-a|> k \sigma\} >0$ as $n \to \infty$ because of non-compactness of support of the distribution. Therefore, it is clear that the explanation of  presence a large number of outliers cannot be provided by the heaviness of the tails. 

\section{Simulation study}\label{s3}
\setcounter{equation}{0}

Although the results of Section \ref{s2} show that asymptotically the probability of presence of a large number of outliers is negligible, we may still have doubts for not too large samples. However, the samples in practice are not too small. So, basing on actual observed data, we can consider the sample size $n$ of order $50,000$ as typical one. 

1. We simulated $m=1500$ samples of size $n$ ($n$ is growing from 1,000 to 25,000 with step 2,000) and calculated the estimate of probability $p_n$ given by (\ref{eq1}) for $k=3$. The behavior of this probability as a function of $n$ is reflected on Figure \ref{fig1}.  Blue line corresponds to symmetric stable distribution with $\alpha=1.2$; red line - to standard Gaussian distribution.
We see that the probability $p_n$ for $\alpha=1.2$-stable symmetric distribution becomes smaller that for Gaussian case starting at about $n=18,000$. Therefore, we cannot expect many outliers for such stable distribution for typical sample size. 

\begin{figure}[h]
	\centering
	\hfil
	\includegraphics[scale=0.8]{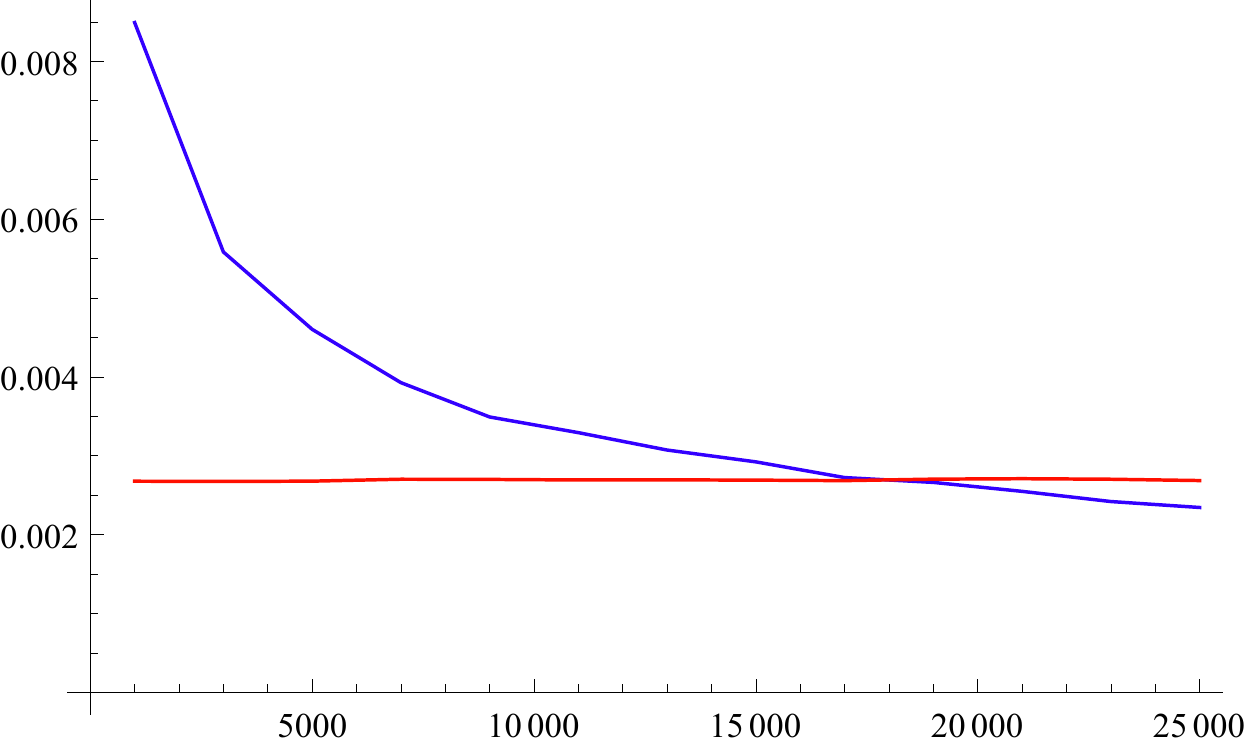}
	\caption{Probability $p_n$ for $k=3$. Blue line corresponds to symmetric stable distribution with $\alpha=1.2$; red  line - to standard Gaussian distribution}\label{fig1} 
 \end{figure} 

2. Again, we simulated $m=1500$ samples of size $n$ ($n$ is again growing from 1,000 to 25,000 with step 2,000) and calculated the estimate of probability $p_n$ given by (\ref{eq1}) for $k=2.5$. The behavior of this probability as a function of $n$ is plotted on Figure \ref{fig2}.  Blue line corresponds to symmetric stable distribution with $\alpha=1.8$; red line - to standard Gaussian distribution. We see that the probability $p_n$ for $\alpha=1.8$-stable symmetric distribution becomes smaller than that for Gaussian case starting from about $n=4,000$. 

The decrease of $p_n$ for the case $k=3$ is much slower. For example, $p_{50000}=0.00591093$ for  $\alpha=1.8$-stable symmetric distribution versus $0.0026998$ for Gaussian distribution.  In this situation, we cannot say that stable distribution provides less outliers that Gaussian law. But will the corresponding number of outliers be sufficient for the predicted figures to be in agreement with observed data? Unfortunately, there are only a few papers giving observed number of outliers, that is an estimate of probability $p_n$ for $k=3$. Some data of this kind may be found in \cite{EK}, Table 3. Corresponding estimates for probabilities $|X_1-\bar{x}_n|>3 s_n$ given there vary from 0.009 to 0.013. For symmetric stable distribution with $\alpha =1.8$, this probability is $p_{50000}=0.00591093$. We see that it is too small to explain the number of outliers in Table 3 from \cite{EK}.

Let us note that similar simulations for different values of parameters and sample size are given in \cite{Kl}. We shall not discuss them here.

\begin{figure}[h]
	\centering
	\hfil
	\includegraphics[scale=0.8]{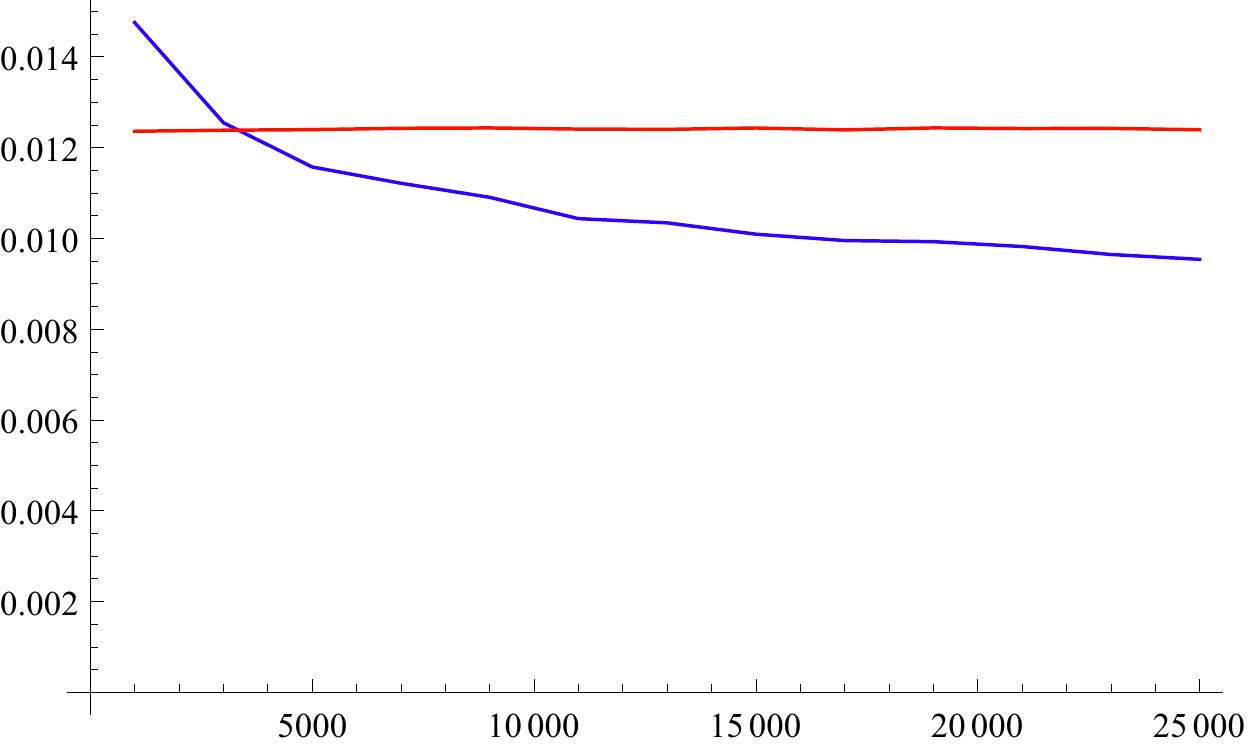}
	\caption{Probability $p_n$ for $k=2.5$. Blue line corresponds to symmetric stable distribution with $\alpha=1.8$; red line - to standard Gaussian distribution}\label{fig2} 
\end{figure} 

\section{Tempered stable distributions}\label{s3a}
\setcounter{equation}{0}

Lately, {\it tempered stable distributions} have been growing increasingly popular. The idea of such distribution lies in altering the tails of stable laws by exponential tails starting from a certain point of distribution's support. Of course, the probability $p_n$ in this case will not converge to zero anymore. However, one may expect that this probability will be small if the point of the tails' alteration is far from the origin, since tempered distribution will be close to stable in this case. Simulations support this opinion. 

Let us consider only the case of symmetric tempered stable distributions having characteristic functions
\begin{equation}
\label{eqA}
f(u,\alpha,\lambda) =\exp\{A\bigl( (\lambda-iu)^{\alpha}+(\lambda+i u)^{\alpha}-2\lambda^{\alpha}\bigr)\}
\end{equation}
for $\alpha \in (1,2)$, $A>0$, $\lambda>0$. 

On Figure \ref{fig2a}, plot of limit probability $p(\alpha,\lambda) =\lim_{n \to \infty}p_n$ is given for the distribution (\ref{eqA}) as a function of parameters $\alpha$ and $\lambda$ for $A=1$. This plot shows that the probabilities are too small to explain the presence of large number of outliers.

\begin{figure}[h]
	\centering
	\hfil
	\includegraphics[scale=0.8]{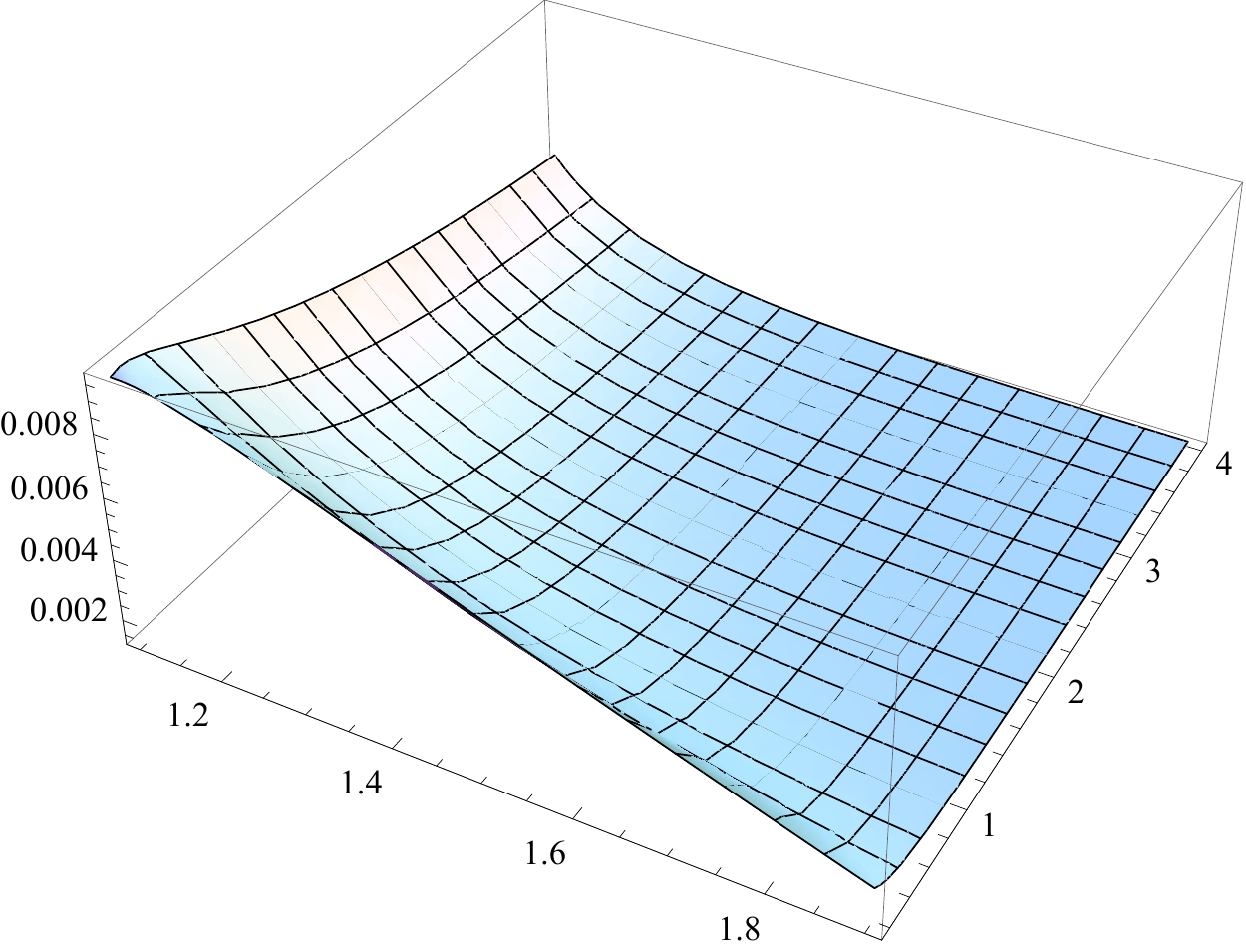}
	\caption{Probability $p_n$ as a function of parameters $\alpha \in (1.1,1.9)$ and $\lambda \in (0.2,4)$.}\label{fig2a} 
\end{figure}

\section{How to obtain more outliers?}\label{s4}
\setcounter{equation}{0}

Here we discuss a way of constructing from a distribution another one having a higher probability to observe outliers. We call this procedure "put tail down".

Let $F(x)$ be a probability distribution function of random variable $X$ having finite second moment $\sigma^2$ and such that $F(-x) = 1-F(x)$ for all $x \in \R^1$. Take a parameter $p \in (0,1)$ and fix it. Define a new function 
\[ F_p(x)=(1-p)F(x) + p H(x),\]
where $H(x) = 0$ for $x<0$, and $H(x)=1$ for $x>0$. It is clear that $F_p(x)$ is probability distribution function for any $p \in (0,1)$. Of course, $F_p$ also has finite second moment $\sigma_p^2$, and $F_p(-x)=1-F_p(x)$. However, $\sigma_p^2=(1-p)\sigma^2 ,\sigma^2$. Let $Y_p$ be a random variable with probability distribution function $F_p$. Then
\[ \p\{|Y_p|>k\sqrt{1-p}\sigma\} = 2 \p\{Y_p>k\sqrt{1-p}\sigma\} =2(1-p)\bigl(1-F(k\sqrt{1-p}\sigma)\bigr).\]
Denoting $\bar{F}(x)=1-F(x)$ rewrite previous equality in the form
\begin{equation}
\label{eq3}
\p\{|Y_p|>k\sqrt{1-p}\sigma\} = 2 (1-p) \bar{F}(k\sqrt{1-p}\sigma).
\end{equation}
For $Y_p$ to have more outliers than $X$ it is sufficient that
\begin{equation}
\label{eq4}
(1-p) \bar{F}(k\sqrt{1-p}\sigma) > \bar{F}(k\sigma).
\end{equation}
There are many cases in which inequality (\ref{eq4}) is true for sufficiently large values of $k$. Let us mention two of them.
\begin{enumerate}
\item Random variable $X$ has exponential tail. More precisely, 
\[ \bar{F}(x) \sim C e^{-a x}, \;\text{as}\; x \to \infty , \]
for some positive constants $C$ and $a$. In this case, inequality (\ref{eq4}) is equivalent for sufficiently large $k$ to
\[(1-p) > Exp\{-a\cdot k \cdot \sigma \cdot (1-\sqrt{1-p})\},\]
which is obviously true for large $k$.
\item $F$ has power tail, that is $\bar{F}(x) \sim C/x^{\alpha}$, where $\alpha>2$ in view of existence of finite second moment. Simple calculations show that (\ref{eq4}) is equivalent as $k \to \infty $ to
\[ (1-p)^{1-\alpha/2} <1. \]
The last inequality is true for $\alpha >2$. 
\end{enumerate}
Let us note that the function $F_p$ has a jump at zero. However, one can obtain similar effect without such jump by using a smoothing procedure, that is by approximating $F_p$ by smooth functions.

"Put tail down" procedure allows us to obtain more outliers in view 
of two its elements. First element consists in changing the tail by smaller, but proportional to previous with coefficient $1-p$. The second element consist in moving a part of mass into origin (or into a small neighborhood of it), which reduces the variance.

\section{Limit Theorems for sums of a random number of random variables}\label{s5}
\setcounter{equation}{0}

One of the arguments used to support the use of stable distributions in Finance is that the observations may be considered as sums of a large number of random variables. However, to have the convergence to stable distribution, the summands themselves must have heavy tails, which seems to be unnatural. Moreover, as we saw in Sections \ref{s2}, \ref{s3}, the probability to observe large number of outliers is not in agreement with real data. However, alternatives to Gaussian distribution are not restricted by stable distributions only. There are limit laws for the sums of i.i.d. random variables for {\it random number of summands}. The theory of such limit distributions has been developed by Robbins, Dobrushin, Gnedenko and others. The description of this theory can be found in the book \cite{KKR}. For some new results, see \cite{KKRT}. As shown in these publications, the sums of a random number of i.i.d. random variables converges to so-called $\nu$-normal distribution. The form of this distribution depends on the law for number of summands. In the case of geometric distribution for the number of summands the limit law is Laplace distribution instead of Gaussian. For the case of transformed negative binomial distribution for the number of summands the role of Gaussian law is played by symmetric gamma distribution. Many other examples are given in \cite{KKRT}. All such distributions have finite second moment. For many of them, the probability (\ref{eq2}) is essentially higher than for Gaussian law. For example, Laplace distribution gives the limit value of probability $p_n$ for $k=3$ equal to $0.0143696$, versus $0.0026998$ for Gaussian distribution. Our suggestion is that such distributions are good alternatives to Gaussian and stable laws.

\end{document}